\newtheorem{theorem}{Theorem}
\newtheorem{remark}{Remark}
\begin{document}

\title{Data-driven System Interconnections and a Novel Data-enabled Internal Model Control}

\author{Yasaman~Pedari,~\IEEEmembership{Student Member,~IEEE,}
        Jaeho Lee,~\IEEEmembership{Student Member,~IEEE,}
        Yongsoon Eun,~\IEEEmembership{Senior Member,~IEEE,} and~Hamid~Ossareh,~\IEEEmembership{Senior Member,~IEEE}
\thanks{This work was supported by NASA under agreement VT-80NSSC20M0213 and by National Science Foundation award CMMI-2238424.}
\thanks{Y. Pedari and H. Ossareh are with the Dept. of Elec. \& Biomed. Eng., Univ. of Vermont, Burlington, VT, USA ({\tt\small ypedari, hossareh@uvm.edu}). J. Lee and Y. Eun are with the Dept. of Elec. Eng. \& Comp. Sci., DGIST, Daegu, South Korea ({\tt\small jaeho.lee,yeun@dgist.ac.kr}).}%
}

\maketitle

\begin{abstract}

Over the past two decades, there has been a growing interest in control systems research to transition from model-based methods to data-driven approaches. In this study, we aim to bridge a divide between conventional model-based control and emerging data-driven paradigms grounded in Willems' ``fundamental lemma". Specifically, we study how input/output data from two separate systems can be manipulated to represent the behavior of interconnected systems, either connected in series or through feedback. Using these results, this paper introduces the Internal Behavior Control (IBC), a new control strategy based on the well-known Internal Model Control (IMC) but viewed under the lens of Behavioral System Theory. Similar to IMC, the IBC is easy to tune and results in perfect tracking and disturbance rejection but, unlike IMC, does not require a parametric model of the dynamics. We present two approaches for IBC implementation: a component-by-component one and a unified one. We compare the two approaches in terms of filter design, computations, and memory requirements.

\end{abstract}

\begin{IEEEkeywords}
Data-driven Control, Fundamental Lemma, Data-driven System Interconnections, Internal Model Control
\end{IEEEkeywords}

\section{Introduction}
Recently, there has been a growing interest in data-driven control methods, mainly due to a surge in available large datasets. 
These methods potentially offer seamless end-to-end and flexible design, particularly for systems that are difficult to model precisely.
Following the trend towards data-driven control, the work by Willems and coauthors on behavioral system theory (BST) stands out\cite{willems1997introduction}. They demonstrated that a sufficiently exciting trajectory of a Linear Time-Invariant (LTI) system {\color{black}} can be used to represent all its potential trajectories. This concept termed as the ``fundamental lemma", highlights the power of using data for learning and control. 

In this study, we aim to bridge a divide between conventional model-based control and emerging data-driven paradigms grounded in Willems' fundamental lemma. Specifically, we present how noise-free\footnote{Handling noisy offline data within the context of BST is an active area of research and is a topic for future work.} input/output data from multiple systems can be used to represent the dynamics of an interconnected system in the BST framework. The existing literature (e.g., \cite{willems2007behavioral, willems1991paradigms,willems1989models}) has also studied the interconnection problem {\color{black} in the context of BST}, but not from an algorithmic standpoint, rather from a purely theoretical standpoint as constraints on trajectories or latent variables, or through parametric  (i.e., state-space) representations of the dynamics. 
Applications of \cite{willems2007behavioral, willems1991paradigms,willems1989models} in controls have been reported in \cite{maupong2017data,maupong2016data,trentelman1996control,willems1997interconnections}.
Building on these foundations, the first contribution of this paper is an
 algorithmic approach to convert individual system trajectories collected in isolation into a data-driven representation of the interconnected system, either in series or in feedback. 

Using these data-driven interconnections, we propose a new control framework called the Internal Behavior Control (IBC), which can be viewed as a data-driven version of the classical Internal Model Control (IMC). The IMC uses a model of the plant, as well as a causal approximate inverse, inside the controller \cite{garcia1982internal}. Here, we replace the model and the inverse with data-enabled predictors. As we show, IBC inherits the desirable properties of the model-based IMC: integral control, ease of tuning, and ability to naturally handle Hammerstein-Weiner systems (e.g., actuator saturation). 
We offer two {\color{black} approaches for implementing} the proposed IBC. The first implementation is component-by-component, where separate data-enabled predictors are leveraged for the forward and inverse models. The second implementation combines the data-driven forward and inverse predictors using the system interconnection ideas discussed earlier. This effectively results in a single data-enabled controller, which is more compact and more memory efficient than the first implementation. The focus of this paper will be on single-input single-output {\color{black}(SISO)}, stable, minimum-phase systems, though extension to the multi-input multi-output case is not difficult. 

Note that other model-based control strategies have also been recently viewed under the lens of BST. Successful examples include the DeePC algorithm \cite{coulson2019data}, which stands parallel to the Model Predictive Control (MPC), and the design of optimal Linear Quadratic Gaussian controllers directly from the subspace matrices in \cite{favoreel1999model}. These controllers may lack interpretability and may be hard to tune. The proposed IBC does not suffer from these challenges.


In summary, this work mainly offers two contributions: {\it i}) 
methods and algorithms for data-driven system interconnections; and {\it ii}) the IBC, which is a data-driven control framework inspired by the classical IMC. To the best of our knowledge, with the exception of the work \cite{rueda2020data} that studies a similar problem for control of second-order Volterra systems, this is the first study of the IMC controller in the context of BST. 

This paper is organized as follows. Sec. II provides an overview of BST and IMC.  Data-driven interconnections and the IBC are presented in Sections \ref{sec:inter} and  \ref{sec:IBC}, respectively. The conclusions and future work are provided in Section V.

The notation in this work is as follows. Given a system $G$, the integers $n({G})$ and $L({G})$ denote the order and $L$-delay (i.e., the relative degree in the SISO case), respectively. For a scalar signal $v{(t)}\in\mathbb{R}$, 
the vector of values from time $t_1$ to $t_2$ is denoted by
$\mathbf{v}_{[t_1,t_2]} = [v(t_1), \ldots, v(t_2)]^\top$.  When there is no risk of confusion, we omit the subscript and denote the vector by boldface letters, i.e., $\mathbf{v}$.
Given two vectors, $\mathbf{x}$ and $\mathbf{y}$, we define $\mathrm{col}(\mathbf{x},\mathbf{y}) = \begin{bmatrix}
    \mathbf{x}^T & \mathbf{y}^T
\end{bmatrix}^T$.
For the vector $\mathbf{v}_{[t_1,t_2]}$, the Hankel matrix of order $T$ is defined as \par
{\small {\begin{equation*}
    \mathscr{H}_{T}(\mathbf{v}_{[t_1,t_2]}) = \begin{bmatrix}
        v{(t_1)} & v{(t_1+1)} & \dots & v{(t_2 - T + 1)} \\
        v{(t_1+1)} & v{(t_1+2)} & \dots & v{(t_2 - T)} \\
        \vdots & \vdots & \ddots & \vdots \\
        v{(t_1+T-1)} & v{(t_1+T)} & \dots & v{(t_2)}
    \end{bmatrix}.
\end{equation*}
}}

\section{Overview of BST and IMC}

\subsection{Overview of BST and Data-enabled predictions}\label{sub:overview BST}
This section provides a review of BST, for details see \cite{MARKOVSKY202142}. Consider a {\color{black}SISO} LTI system $G$ with input $u \in \mathbb{R}$ and output $y \in \mathbb{R}$. 
For any positive integer $T$, 
 we define the ``restricted behavior" of $G$, denoted by $\mathscr{B}|_T(G)$, as the set of all $T$-long trajectories $\textbf{w}=\mathrm{col}(\textbf{u},\textbf{y})\in \mathbb{R}^{2T}$, where 
 $\textbf{y} \in \mathbb{R}^T$ is the output trajectory corresponding to $\textbf{u} \in \mathbb{R}^T$ from some initial condition. 

Consider an offline-collected single $T_d$-long trajectory, $\textbf{w}^d = \mathrm{col}(\textbf{u}^d,\textbf{y}^d)\in \mathbb{R}^{2T_d}$ of system $G$. Let $T_p$ be a positive integer to be later selected. 
We construct the ``data matrix", $\mathcal{H}$, 
as
\begin{equation}\label{eq: hankel}
    \mathcal{H} = \begin{bmatrix}
    \mathcal{H}_u \\ \mathcal{H}_y
\end{bmatrix}\in \mathbb{R}^{2(T_p+1)\times (T_d-T_p)}
\end{equation} where 
\begin{equation*}
    \mathcal{H}_u = \mathscr{H}_{T_p+1}(\textbf{u}^d) =\begin{bmatrix}
    U_p \\ U_f
\end{bmatrix},\quad \, \mathcal{H}_y = \mathscr{H}_{T_p+1}(\textbf{y}^d) = \begin{bmatrix}
    Y_p \\ Y_f
\end{bmatrix},\end{equation*} with the partitioning $U_p, Y_p \in \mathbb{R}^{T_p\times(T_d-T_p)}$ and $U_f, Y_f \in \mathbb{R}^{1\times(T_d-T_p)}$. 
 If $T_p \geq n(G)$ and $\mathcal{H}$ satisfies the ``low-rank condition'',
\begin{equation}\label{eq:low rank condition}
\text{rank}(\mathcal{H}) = T_p+1+n(G),
\end{equation}
then, any $(T_p+1)$-long trajectory of the system will belong to the column space of $\mathcal{H}$\footnote{This result has become known as the ``Generalized Persistency of Excitation Condition", see \cite[Corollary 21]{markovsky2022identifiability}. The condition formally involves the ``lag" of the system, which in the SISO case is equal to the order, $n$.}. This condition allows us to use the data matrix for the purpose of output prediction. Specifically, given the recent $T_{p}$ samples of the input/output trajectory ($\mathbf{u_{ini}}\in \mathbb{R}^{T_{p}}$ and $\mathbf{y_{ini}}\in \mathbb{R}^{T_{p}}$), and a ``future" input ${u_\text{pred}}\in \mathbb{R}$, the  \textbf{unique} future output, ${y_\text{pred}}\in \mathbb{R}$, can be predicted by solving:\par
{\small
\begin{equation}\label{eq:predictor}
\mathcal{H}{\mathbf{g}} = \left[ \begin{array}{c}
\mathbf{u_{ini}}\\
{u_\text{pred}}\\
\mathbf{y_{ini}}\\
{y_\text{pred}}\end{array}\right] 
\end{equation}}\par
\noindent for the unknowns $\mathbf{g}$ and ${y_\text{pred}}$. 
The solution for vector $\mathbf{g}$ is generally not unique. Often, the minimum-norm solution is employed:
\begin{equation}\label{eq:g*}
\mathbf{g}^* = \left[\begin{array}{c}U_p\\U_f\\Y_p \end{array}\right]^\dagger\left[\begin{array}{c}\mathbf{u_{ini}}\\{u_\mathrm{pred}}\\\mathbf{y_{ini}} \end{array}\right]
\end{equation}
and so
\begin{equation}\label{eq:ypred}
{y_\mathrm{pred}} = Y_f \mathbf{g}^*.
\end{equation}

\begin{remark}
    Two assumptions in this review are not always necessary: First, the data matrix does not require a single long trajectory; it can be formed from independent trajectories; for details see \cite{MARKOVSKY202142}. Second, our predictors are single-step predictors (i.e., $u_{\text{pred}}, y_{\text{pred}} \in \mathbb{R}$). However, longer prediction horizons may be used provided the data matrix satisfies the low-rank condition.
\end{remark}

{\color{black}\begin{remark}
Because of the use of a single step prediction (i.e., $y_\mathrm{pred} \in \mathbb{R}$), the solution to \eqref{eq:g*} and \eqref{eq:ypred} is equivalent to the prediction obtained from a certainty-equivalent ARX model identified from the offline data. For details, please see \cite{dorfler2022bridging}. 

    
\end{remark}
}

\subsection{Data-Enabled prediction of System Inverse}\label{sub: dd pred}

This section provides a review of data-enabled predictions for the system inverse, as required for the IBC approach; for details see \cite{eun2023data}. 
Given an output trajectory, the goal is to compute the associated input using the offline data.  
The ideas are similar to the previous section,  except that we predict the input instead of the output and $\mathcal{H}_y$ is now of a different order. More specifically, the data matrix for the inverse system is modified to
\begin{equation}\label{eq: hankel2}
    \mathcal{H}_{\text{inv}} = \begin{bmatrix}
    \mathcal{H}_u \\ \mathcal{H}_{y,\text{inv}}
\end{bmatrix}\in \mathbb{R}^{(2T_p+2+L(G))\times (T_d-T_p)}
\end{equation} where $\mathcal{H}_u$ is as in \eqref{eq: hankel} and  $\mathcal{H}_{y,\text{inv}}$ is defined as:
\begin{equation*}
    \mathcal{H}_{y,\text{inv}} = \mathscr{H}_{T_p+L(G)+1}(\textbf{y}^d) = \begin{bmatrix}
    Y_p \\ Y_{fL}
\end{bmatrix},\end{equation*} with the partitioning $Y_p \in \mathbb{R}^{T_p\times(T_d-T_p)}$ and $Y_{fL} \in \mathbb{R}^{(1+L(G))\times(T_d-T_p)}$. Here, we have considered different lengths for the offline data: $T_d$ for $\mathbf{u}^d$  and $T_d+L(G)$ for $\mathbf{y}^d$. 
 The low-rank condition for the inverse case remains the same as \eqref{eq:low rank condition}. If this condition is satisfied, $\mathcal{H}_{\text{inv}}$ can be used to compute the input. To see this, at every timestep $t$, given the recent $T_p + 1 + L(G)$ samples of the output and the same samples of the input without the last $L(G)$ elements, we can uniquely compute the input $L(G)$ timesteps in the past. Mathematically, given 
$$
\mathbf{u}_{\text{ini}}^{\text{inv}} = \mathbf{u}_{[t-T_p-L(G), t-L(G)-1]}, 
    $$
    $$
    \mathbf{y}_{\text{ini}}^{\text{inv}} = \mathbf{y}_{[t-T_p-L(G), t-L(G)-1]}, \,\,\, \quad
    \mathbf{y_{\mathrm{pred}}^{\mathrm{inv}}} = \mathbf{y}_{[t-L(G),t]},
$$
${u_{\mathrm{pred}}^{\mathrm{inv}}= u(t-L(G))\in \mathbb{R}}$ can be obtained by solving:
\begin{equation}\label{eq:inv}
    \begin{bmatrix}
        U_p\\U_f\\Y_p\\Y_{fL}
    \end{bmatrix}\mathbf{g}^\mathrm{inv} = \begin{bmatrix}
        {\mathbf{u}_{{\mathrm{ini}}}^{\mathrm{inv}}}\\
        {u_{\mathrm{pred}}^{\mathrm{inv}}}\\{\mathbf{y}_{{\mathrm{ini}}}^{\mathrm{inv}}}\\
    {\mathbf{y}_{{\mathrm{pred}}}^{\mathrm{inv}}}
    \end{bmatrix}
\end{equation} for the unknowns $\mathbf{g}^\mathrm{inv}\in \mathbb{R}^{(T_d-T_p)\times(2T_p+L(G)+2) }$ and $u_{\mathrm{pred}}^{\mathrm{inv}}$.  
As in the case of forward prediction, 
the minimum-norm solution to $\mathbf{g}^\mathrm{inv}$ is often employed, which leads to:
\begin{equation}\label{eq:inv sol}
{u_{\mathrm{pred}}^{\mathrm{inv}}} =U_f\left[\begin{array}{c}U_p\\Y_p\\Y_{fL} \end{array}\right]^\dagger\left[\begin{array}{c}{\mathbf{u}_{{\mathrm{ini}}}^{\mathrm{inv}}}\\{\mathbf{y}_{{\mathrm{ini}}}^{\mathrm{inv}}}\\
{\mathbf{y}_{{\mathrm{pred}}}^{\mathrm{inv}}} \end{array}\right].
\end{equation}


\subsection{Overview of IMC}
A block diagram of the IMC structure is shown in Fig. \ref{fig:IMCBD}, where ${G}$ represents the system, $\hat{{G}}$ is a model of the system, $\hat{{G}}^{-1}$ is the ideal inverse of the model (to be revisited), and ${F}$ is a filter to be defined. The goal is to select ${F}$ to achieve a certain performance objective, e.g., to minimize the $H_2$-norm of the tracking error for step setpoints. For open-loop-stable minimum-phase systems, this $H_2$-norm is theoretically minimized by setting ${F}=1$. 
In practice, ${\hat{G}}^{-1}$ is improper and cannot be implemented. Moreover, even if it could be implemented, it would render the feedback loop sensitive to uncertainties. This requires the augmentation of ${\hat{G}}^{-1}$ with a low-pass filter whose order is at least the relative degree of ${\hat{G}}$. If a description of $\hat{{G}}$'s uncertainties is available, the time constant of the filter is selected to ensure robust performance and stability \cite{morari1987robust, rivera1986internal}. Otherwise, the time constant is treated as a tuning parameter to directly trade-off performance and robustness, rendering tuning an easy task. Note that input or output nonlinearities (i.e., Hammerstein-Wiener models \cite{aguirre2005interpretation}) can be naturally compensated for by including them in the model path. In our study, we aim to substitute $\hat{{G}}$ and ${\hat{G}}^{-1}$ with a data-enabled predictor while preserving IMC's key features, namely perfect steady-state tracking and disturbance rejection, ease of tuning, and minimization of $H_2$ norm for step commands.

\begin{figure}
    \centering
    \begin{tikzpicture}[auto, node distance=2cm,>=latex',
  sum/.style={
    draw, 
    circle, 
    inner sep=2pt
  }]
    \node [draw, rectangle, minimum width=1.5cm] (G1) { \( \hat{G}^{-1} \) };
    \node [draw, rectangle, right of=G1 , node distance=2cm, minimum width=1.5cm] (G2) { \( F \) }; 
    \node [draw, rectangle, right of=G2 , node distance=2cm, minimum width=1.5cm] (G3) { \( G \) }; 
    \node [draw, rectangle, below of=G3 , node distance=1cm, minimum width=1.5cm] (G4) { \( \hat{G} \) }; 
    \node[sum,  left=0.5cm of G1] (sum1) {};
    \node[sum,  right=0.5cm of G4] (sum2) {};

    \draw[->]   ([xshift=-1cm]sum1.east)-- node[name=s1, midway]{ \( r \) }(sum1);
    \draw[->] (G3) -- node[name=y, midway] { \( y \) } ([xshift=1cm]G3.east);
    
    \draw[->] (sum1.east) -- node[name=s1, midway]{ \( s_1 \) }(G1.west);
    \draw[->] (G1.east) -- node[name=s2, midway]{ \( s_2 \) }(G2.west);
    \draw[->] node[name=u, midway]{}(G2.east) -- (G3.west);
    \draw[->] (G2.east) -- node[name=u, midway]{ \( u \) }(G3.west);
    \draw[->] (G4.east) -- node[name=yhat, midway]{ \( \hat{y} \) }(sum2);
    \draw[->] (G2.east) -- ++(0.2cm,0) -- ++(0,-1cm) --  (G4.west);
    \draw[->] (G3.east) -- ++(0.6cm,0) -- (sum2.north);
    \draw[->] (sum2.south) -- ++(0,-0.5cm) -- ++ (-6.75cm,0cm)--node[name=s1, midway]{ \( e \) }(sum1.south);
    \node[anchor=north east] at ([xshift=0.1cm,yshift=-0.1cm]sum1.west) {-}; 
    \node[anchor=north east] at ([xshift=0.1cm,yshift=-0.1cm]sum2.west) {-};
    \node[anchor=north east] at ([xshift=0.5cm,yshift=0.5cm]sum2.west) {+};

\end{tikzpicture}

    \caption{Block diagram of the IMC structure}
    \label{fig:IMCBD}
\end{figure}
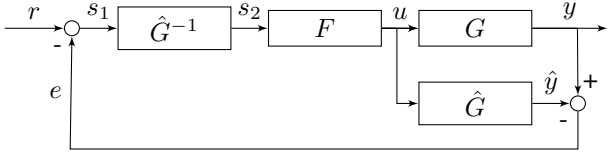

\section{Data-driven System Interconnections}\label{sec:inter}
In this section, we present how to create a data-driven representation of the interconnection of two system
-- whether they are in series or in feedback -- using available input/output data from each individual system. 
As we show in the next section, this allows us to view 
traditional model-based control techniques, which frequently depend on system block interconnections, under the lens of data-driven predictors. 


As discussed in \cite{willems1997interconnections}, in order to interconnect two systems, the output of one system must match the input of the other. However, in general, the offline input-output data may be collected in isolation, leading to mismatches. To tackle this issue, we introduce the following function, the specific form of which will be defined later: 
\begin{equation*}
   \mathbf{y}^* := \mathcal{Z}(\mathbf{u}^*,\, \mathbf{u}^d,\, \mathbf{y}^d): \mathbb{R}^{T}\times \mathbb{R}^{T_d} \times \mathbb{R}^{T_d} \to \mathbb{R}^{T}.
\end{equation*}
In words, given an input sequence $\mathbf{u}^*$, this function predicts the output sequence, $\mathbf{y}^*$, of the system starting from zero initial conditions, while ensuring that $\mathrm{col}(\mathbf{u^*},\mathbf{y^*})$ is an admissible trajectory, i.e., compatible with  $\mathbf{w}^d = \mathrm{col}(\mathbf{u}^d,\mathbf{y}^d)$.
 To this end, we introduce a new input trajectory $\mathbf{u}^\mathrm{mod}$, which is identical to $\mathbf{u^*}$ except it is padded with $T_p$ zeros at the beginning:
 \begin{equation*}
    u^{\text{mod}}{(t)} =  \begin{cases} 
    0 & \text{if } 0 \leq t < T_{p}, \\
 {u^*}(t-T_p)& \text{if } T_{p} \leq t \leq T+T_p. 
\end{cases}
\end{equation*} 
 We then define $\mathcal{Z}$ recursively, where 
$\mathbf{y}^*$ is computed element by element.
Specifically, using \eqref{eq:g*} and \eqref{eq:ypred}, ${y}^*(t)$ is given by:


\begin{equation}
y^*(t)={y^\mathrm{mod}}(t+T_p),
\end{equation}
where $y^{\text{mod}}{(t)}$ is computed recursively as
\begin{equation*}
    y^{\text{mod}}{(t)} =  \begin{cases} 
    0 & \text{if } 0 \leq t < T_{p}, \\
Y_f \left[\begin{array}{c}U_p\\U_f\\Y_p \end{array}\right]^\dagger\left[\begin{array}{c}\mathbf{u}^{\text{mod}}_{[t-T_{p}, t-1]}\\u^{\text{mod}}{(t)}\\\mathbf{y}^{\text{mod}}_{[t-T_{p}, t-1]} \end{array}\right]& \text{if } T_{p} \leq t \leq T+T_p.  \\
\end{cases}
\end{equation*} 
Here, the Hankel matrices $U_p$, $U_f$, $Y_p$, and $Y_f$ are constructed using $\mathbf{u}^d$ and $\mathbf{y}^d$, see \eqref{eq: hankel}. 
Note that $\mathcal{Z}$ is only defined when the data matrix of $\mathbf{w}^d$ has rank $T_p+1+n(G)$ for a selected $T_p \geq n(G)$. 



The above ideas can be used to ``remove" the effect of initial conditions from the offline output $\mathbf{y}^d$, as shown in the theorem below. The proof is straightforward and omitted.  

\begin{theorem}

Consider a SISO LTI system $G$ and let $T_p\geq n(G)$. Suppose a $T_d$-long trajectory $\mathbf{w}^d = \mathrm{col}(\mathbf{u}^d,\mathbf{y}^d)$ of $G$, whose data matrix has rank $T_p+1+n(G)$, is given. The trajectory $\mathbf{w}_0 = \mathrm{col}(\mathbf{u}^d,\mathcal{Z}(\mathbf{u}^d,\, \mathbf{u}^d,\, \mathbf{y}^d))$ is a trajectory of $G$ starting from zero initial conditions. Furthermore, if $\mathbf{w}^d$ is collected starting from zero initial conditions, then $\mathbf{w}_0 = \mathbf{w}^d$. 

\end{theorem}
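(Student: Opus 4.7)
The plan is to unroll the definition of $\mathcal{Z}$ and exploit the fact that an LTI system driven by zero input for at least $n(G)$ steps, starting from any finite state, eventually reaches zero initial conditions in the behavioral sense (i.e., its $T_p$ most recent input/output samples are all zero when $T_p \ge n(G)$).

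First, I would unpack the construction: by definition, $\mathbf{u}^{\mathrm{mod}}$ consists of $T_p$ leading zeros followed by $\mathbf{u}^d$, and $y^{\mathrm{mod}}(t)$ is set to zero for $0 \le t < T_p$. For $t \ge T_p$, the recursion applies the data-enabled single-step predictor of Section~\ref{sub:overview BST} with the ``initial trajectory'' taken to be the most recent $T_p$ samples of $(\mathbf{u}^{\mathrm{mod}},\mathbf{y}^{\mathrm{mod}})$. Since the data matrix $\mathcal{H}$ satisfies the low-rank condition \eqref{eq:low rank condition} with $T_p \ge n(G)$, the generalized persistency of excitation result cited in the paper guarantees that \eqref{eq:g*}--\eqref{eq:ypred} produce the \emph{unique} next output consistent with the behavior $\mathscr{B}|_{T_p+1}(G)$.

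Next, I would argue inductively on $t \ge T_p$ that $\mathrm{col}\bigl(\mathbf{u}^{\mathrm{mod}}_{[0,t]},\mathbf{y}^{\mathrm{mod}}_{[0,t]}\bigr)$ is a bona fide trajectory of $G$ originating from the zero state at $t=0$. The base case $t=T_p-1$ holds trivially because zero input applied from the zero state yields zero output. For the inductive step, the uniqueness from the fundamental lemma forces $y^{\mathrm{mod}}(t)$ to equal the true output of $G$ driven by $\mathbf{u}^{\mathrm{mod}}$ from zero state. Consequently, shifting by $T_p$ and using linear time-invariance, $\mathbf{y}^* = \mathcal{Z}(\mathbf{u}^d,\mathbf{u}^d,\mathbf{y}^d)$ is precisely the zero-initial-condition response of $G$ to $\mathbf{u}^d$, so $\mathbf{w}_0 = \mathrm{col}(\mathbf{u}^d,\mathbf{y}^*)$ is a zero-IC trajectory of $G$.

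For the second claim, observe that if $\mathbf{w}^d$ was itself collected from zero initial conditions, then $\mathbf{y}^d$ is already the zero-IC response of $G$ to $\mathbf{u}^d$. By uniqueness of the zero-IC output, $\mathbf{y}^* = \mathbf{y}^d$, hence $\mathbf{w}_0 = \mathbf{w}^d$. The only subtlety I anticipate is bookkeeping the time-shift between $y^*$ and $y^{\mathrm{mod}}$ and justifying that the low-rank condition on $\mathcal{H}$ (built from the nonzero-IC data) still licenses predictions along the zero-padded trajectory; this is immediate because the low-rank/persistency condition characterizes $\mathscr{B}|_{T_p+1}(G)$ as a whole, independently of which particular trajectory populated $\mathcal{H}$.
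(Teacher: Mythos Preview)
The paper does not actually prove this theorem; it states only that the proof is ``straightforward and omitted.'' Your argument supplies precisely the details one would expect: unrolling the zero-padding in $\mathbf{u}^{\mathrm{mod}}$ and $\mathbf{y}^{\mathrm{mod}}$, observing that the first $T_p$ all-zero samples constitute a valid trajectory of $G$ from the zero state (since $T_p\ge n(G)$ equals the lag in the SISO case), and then inductively invoking the uniqueness of the one-step prediction under the low-rank condition \eqref{eq:low rank condition}. This is the natural proof and it is correct; the second claim follows immediately from uniqueness of the zero-initial-condition response, as you say.

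One small caution: your opening ``plan'' sentence, that ``an LTI system driven by zero input for at least $n(G)$ steps, starting from any finite state, eventually reaches zero initial conditions in the behavioral sense (i.e., its $T_p$ most recent input/output samples are all zero),'' is false as written---the free response $CA^t x_0$ from a nonzero state is generally nonzero. Your actual argument does not use this claim; it correctly relies on the fact that the construction \emph{defines} $y^{\mathrm{mod}}(t)=0$ for $0\le t<T_p$, so the leading segment is the zero trajectory by fiat rather than by convergence. I would simply drop or rephrase that sentence to avoid confusion.
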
 


We now use $\mathcal{Z}$ to study the data-driven interconnection of two systems in series (see Fig. \ref{fig:series}), where their corresponding offline data may be of different lengths and may be collected in isolation. 
The following Theorem provides a mechanism to construct an admissible trajectory of the series interconnection directly from the offline data of the individual systems. Such trajectory can then be used for data-driven prediction of the series interconnection using the previously described methods. 

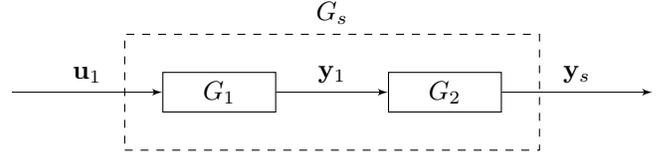
\begin{figure}
    \centering
\begin{tikzpicture}[auto, node distance=2cm,>=latex']
    \node [draw, rectangle, minimum width=1.5cm] (G1) { \( G_1 \) };
    \node [draw, rectangle, right of=G1, node distance=3cm, minimum width=1.5cm] (G2) { \( G_2 \) }; 
    \draw[->] (G1) -- node[name=y1, midway] { \( \mathbf{y}_1 \) } (G2);
    \draw[->] ([xshift=-2cm]G1.west) -- node[name=u1, midway] { \( \mathbf{u}_1 \) } (G1);
    \draw[->] (G2) -- node[name=ys, midway] { \( \mathbf{y}_s \) } ([xshift=2cm]G2.east);
    \node [draw, dashed, fit=(G1)(G2), inner sep=0.5cm, label=above:{ \( G_s \) }] {};
\end{tikzpicture}
    \caption{Series interconnection given data from individual systems. 
    In general, since ${\mathbf{y}}_1\neq {\mathbf{u}}_2$, the interconnection is not achievable. In the figure, the successful interconnection is shown with $G_2$ taking ${\mathbf{y}}_1$ as input and generating $\mathbf{y}_s$ as output.}
    \vspace{-0.2cm}
    \label{fig:series}
\end{figure}


\begin{theorem}\label{th: series}
    Consider $\mathbf{w}_1=\mathrm{col}(\mathbf{u}_1,\mathbf{y}_1)\in \mathscr{B}|_{T_1}(G_1)$ and $\mathbf{w}_2=\mathrm{col}(\mathbf{u}_2,\mathbf{y}_2)\in \mathscr{B}|_{T_2}(G_2)$
    as $T_1$ and $T_2$ long trajectories of SISO LTI systems $G_1$ and $G_2$, respectively. {\color{black}If the data matrix of trajectory  $\mathbf{w}_2$ has rank $T_p+1+n(G_2)$ for $T_p \geq n(G_2)$}, then the trajectory define by 
    \begin{equation*}
        \mathbf{w}_s = \begin{bmatrix}
        \mathbf{u}_1\\\mathbf{y}_s
    \end{bmatrix}\text{, }\, \mathbf{y}_s = \mathcal{Z}({\mathbf{y}}_1, {\mathbf{u}}_2, \, {\mathbf{y}}_2)
    \end{equation*}
    satisfies $\mathbf{w}_s \in \mathscr{B}|_{T_1}(G_s)$, where $G_s$ denotes the LTI system that results from connecting systems $G_1$ and $G_2$ in series.
\end{theorem}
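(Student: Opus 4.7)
The plan is to establish, in order: (i) that $\mathcal{Z}(\mathbf{y}_1,\mathbf{u}_2,\mathbf{y}_2)$ produces a genuine output trajectory of $G_2$ starting from zero initial conditions when driven by $\mathbf{y}_1$; and (ii) that the resulting pair $\mathrm{col}(\mathbf{u}_1,\mathbf{y}_s)$ lies in $\mathscr{B}|_{T_1}(G_s)$ by virtue of the definition of series interconnection.

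Step (i) is the technical heart of the argument and proceeds by induction on $t$ in the recursive definition of $y^{\mathrm{mod}}$. The base regime $0 \le t < T_p$ is immediate: both $u^{\mathrm{mod}}$ and $y^{\mathrm{mod}}$ are zero there, and the all-zero sequence is a valid trajectory of $G_2$ from zero initial conditions. For the inductive step at $t \ge T_p$, the hypothesis is that the window $\mathrm{col}(\mathbf{u}^{\mathrm{mod}}_{[t-T_p,t-1]},\mathbf{y}^{\mathrm{mod}}_{[t-T_p,t-1]})$ is the tail of a $t$-long trajectory of $G_2$ from zero initial conditions. Because the data matrix of $\mathbf{w}_2$ has rank $T_p+1+n(G_2)$ with $T_p \ge n(G_2)$, the fundamental-lemma consequence cited below \eqref{eq:low rank condition} guarantees that every $(T_p+1)$-long trajectory of $G_2$ lies in the column space of the corresponding data matrix, and, crucially, that given the past $T_p$ samples of input and output together with the new input $u^{\mathrm{mod}}(t) = y_1(t-T_p)$, the output $y^{\mathrm{mod}}(t)$ returned by the pseudo-inverse formula \eqref{eq:g*}--\eqref{eq:ypred} is the \emph{unique} sample dynamically consistent with that past window. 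Appending this sample therefore extends the trajectory by one step while preserving its admissibility in $G_2$, closing the induction. Running the recursion to $t = T_1 + T_p$ and discarding the leading $T_p$ zero samples yields $\mathrm{col}(\mathbf{y}_1,\mathbf{y}_s) \in \mathscr{B}|_{T_1}(G_2)$.

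Step (ii) is then a one-line consequence of the series-interconnection semantics: $\mathrm{col}(\mathbf{u}_1,\mathbf{y}_1) \in \mathscr{B}|_{T_1}(G_1)$ holds by assumption and $\mathrm{col}(\mathbf{y}_1,\mathbf{y}_s) \in \mathscr{B}|_{T_1}(G_2)$ holds by step (i), and together they exhibit $\mathbf{y}_1$ as an admissible latent interconnection signal that certifies $\mathbf{w}_s = \mathrm{col}(\mathbf{u}_1,\mathbf{y}_s) \in \mathscr{B}|_{T_1}(G_s)$.

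The main obstacle is keeping the initial-condition bookkeeping clean in step (i): one must argue both that the $T_p$ leading zeros genuinely encode zero initial conditions for $G_2$, and that the single-step data-driven predictor produces a sample whose \emph{global} admissibility as the continuation of the current window is guaranteed, not merely its dynamical consistency with that window viewed in isolation. Both points hinge on the unique-continuation property implied by the low-rank condition combined with $T_p \ge n(G_2)$, which is what makes the deceptively simple recursion in the definition of $\mathcal{Z}$ do real work.
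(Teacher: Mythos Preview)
Your proposal is correct and follows essentially the same approach as the paper: use $\mathcal{Z}$ to regenerate $G_2$'s output when driven by $\mathbf{y}_1$, then conclude that $\mathrm{col}(\mathbf{u}_1,\mathbf{y}_s)$ is a valid series trajectory. The paper's proof is a two-sentence sketch that treats the correctness of $\mathcal{Z}$ as already established (via Theorem~1, whose proof is omitted), whereas you spell out the inductive unique-continuation argument explicitly; this added rigor is welcome but does not constitute a different route.
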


\begin{proof}
To interconnect $G_1$ and $G_2$ in series, 
$G_1$'s output must match $G_2$'s input. However, the two trajectories, $\mathbf{w}_1$ and $\mathbf{w}_2$ generally do not satisfy this condition as they may have been collected in isolation. 
To force the output of $G_1$ to be equal to the input of $G_2$, we use $\mathcal{Z}$ to generate $G_2$'s output, $y_s$, given $G_1$'s output, $y_1$, as input. 
The trajectory $\mathrm{col}(\mathbf{u}_1,\mathbf{y}_s)$ would then represent a valid trajectory of the two systems in series, where the second system starts from zero initial conditions. An illustration of the proof is given in Fig. \ref{fig:series}.
\end{proof}

We now study a similar result for the feedback interconnection of two systems (see Fig. \ref{fig:feedback}).  

\begin{theorem}\label{th: feedback}
    Consider $\mathbf{w}_1=\mathrm{col}(\mathbf{u}_1,\mathbf{y}_1)\in \mathscr{B}|_{T_1}(G_1) \text{\, and \,} \mathbf{w}_2=\mathrm{col}(\mathbf{u}_2,\mathbf{y}_2)\in \mathscr{B}|_{T_2}(G_2)$
   as $T_1$ and $T_2$ long trajectories of SISO LTI systems $G_1$ and $G_2$, respectively. {\color{black}If the data matrix of $\mathbf{w}_2$ has rank $T_p+1+n(G_2)$ for $T_p \geq n(G_2)$}, then the trajectory defined by
       \begin{equation*}
         \mathbf{w}_f =\mathrm{col}(
        \mathbf{u}_1-\check{\mathbf{y}}_2, \, {\mathbf{y}}_1), \quad \, \check{\mathbf{y}}_2 =  \mathcal{Z}({\mathbf{y}}_1,\mathbf{u}_2,\mathbf{y}_2)
    \end{equation*}
   satisfies {\color{black}$\mathbf{w}_f \in \mathscr{B}|_{T_1}(G_f)$}, where $G_f$ denotes the interconnected LTI system that results from connecting systems $G_1$ and $G_2$ through positive feedback (shown in Fig. \ref{fig:feedback}).
\end{theorem}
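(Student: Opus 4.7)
The plan is to verify directly that the constructed $\mathbf{w}_f$ satisfies the three defining equations of the positive feedback interconnection shown in Fig.~\ref{fig:feedback}. Writing $r$ for the external input and $y_f$ for the closed-loop output, the interconnection is defined by the relations $u_1 = r + y_2$, $u_2 = y_1$, and $y_f = y_1$, where $(u_i, y_i)$ denotes the input-output pair of $G_i$. The idea is to exhibit signals for each internal variable that are consistent with the dynamics of $G_1$ and $G_2$ and with these three algebraic constraints.

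First, I would invoke the definition of $\mathcal{Z}$ together with the low-rank hypothesis on $\mathbf{w}_2$. Because the data matrix of $\mathbf{w}_2$ has rank $T_p + 1 + n(G_2)$ with $T_p \geq n(G_2)$, the recursive construction of $\mathcal{Z}$ is well-defined and, at each step, yields the unique output of $G_2$ consistent with the low-rank data matrix (this is the generalized persistency-of-excitation property cited before Remark~1). Thus $\check{\mathbf{y}}_2 = \mathcal{Z}(\mathbf{y}_1, \mathbf{u}_2, \mathbf{y}_2)$ is the output of $G_2$ driven by $\mathbf{y}_1$ from zero initial conditions, so $\mathrm{col}(\mathbf{y}_1, \check{\mathbf{y}}_2) \in \mathscr{B}|_{T_1}(G_2)$. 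This already settles the constraint $u_2 = y_1$.

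Next, by assumption $\mathbf{w}_1 = \mathrm{col}(\mathbf{u}_1, \mathbf{y}_1) \in \mathscr{B}|_{T_1}(G_1)$, so the pair $(u_1, y_1) = (\mathbf{u}_1, \mathbf{y}_1)$ is consistent with $G_1$'s dynamics from its inherited initial condition. Defining the external input by $r := \mathbf{u}_1 - \check{\mathbf{y}}_2$ makes the remaining summing-junction equation $u_1 = r + y_2$ hold trivially, and the output equation $y_f = y_1$ is immediate. Concatenating these facts, the tuple $(r, \mathbf{u}_1, \mathbf{y}_1, \mathbf{y}_1, \check{\mathbf{y}}_2)$ assigned to $(r, u_1, y_1, u_2, y_2)$ satisfies the dynamics of both subsystems and all three interconnection equations. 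Projecting onto the external input-output pair gives $\mathbf{w}_f = \mathrm{col}(r, y_f) = \mathrm{col}(\mathbf{u}_1 - \check{\mathbf{y}}_2, \mathbf{y}_1) \in \mathscr{B}|_{T_1}(G_f)$, as claimed.

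The main obstacle, and the step that merits the most care, is the justification that $\mathcal{Z}$ produces a bona fide $G_2$ trajectory: this rests on the uniqueness of the single-step predictor \eqref{eq:predictor}--\eqref{eq:ypred} under the low-rank condition, applied iteratively to the zero-padded input $\mathbf{u}^{\text{mod}}$. A secondary point worth a brief remark is well-posedness of the positive feedback loop, i.e., the absence of an algebraic loop; this is implicit in treating $G_f$ as an LTI system and is inherited from whichever of $G_1, G_2$ has nonzero relative degree. Once these two points are acknowledged, the remainder of the argument is a direct substitution, paralleling the series case in Theorem~\ref{th: series}.
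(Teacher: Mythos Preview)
Your proposal is correct and follows essentially the same approach as the paper: both arguments use $\mathcal{Z}$ to generate the $G_2$ output $\check{\mathbf{y}}_2$ from $\mathbf{y}_1$, then observe that setting the external input to $\mathbf{u}_1-\check{\mathbf{y}}_2$ makes all interconnection constraints hold. Your write-up is in fact more explicit than the paper's---you spell out the three defining equations and note the well-posedness caveat---but the underlying idea is identical.
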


\begin{figure}
    \centering
    
\begin{tikzpicture}[auto, node distance=2cm,>=latex',
  sum/.style={
    draw, 
    circle, 
    inner sep=2pt
  }]
    \node [draw, rectangle, minimum width=1.5cm] (G1) { \( G_1 \) };
    \node [draw, rectangle, below of=G1 , node distance=1cm, minimum width=1.5cm] (G2) { \( G_2 \) }; 
    \node[sum,  left=1cm of G1] (sum) {};
    \draw[->] (G1.east) -- ++(1cm,0) -- node[pos=0.1, left,name=p]{} ++(0,-1cm) --  (G2.east);

    \draw[->] (G2.west) -- ++  (-1.1cm,0)  -- (sum.south) node[pos=0.4,right] { \(  \check{\mathbf{y}}_2 \) };
    \draw[->] (sum.east) -- node[name=u1, pos=0.4] { \(  \mathbf{u}_1\) } (G1.west);
    \draw[->] ([xshift=-2cm]sum.west) -- node[name=ref, pos=0.4] { \(  \mathbf{u}_1 -\check{\mathbf{y}}_2 \) } (sum);
    
    \draw[->]   (G1)--node[name=u1, pos=0.9] { \( \mathbf{y}_1\) }([xshift=2cm]G1.east);
    
    \node [draw, dashed, fit=(G1.east)(G2)(sum)(p), inner sep=0.3cm, label=above:{ \( G_f \) }] {};

    \node[anchor=north east] at ([xshift=0.1cm,yshift=-0.1cm]sum.west) {+};

\end{tikzpicture}

    \caption{
    Positive feedback interconnection given data from individual systems. 
    In general, ${\mathbf{y}}_1\neq \mathbf{u}_2$, so the interconnection is not achievable. In the figure, the successful interconnection is shown with $G_2$ taking ${\mathbf{y}}_1$ as input and generating $\check{\mathbf{y}}_2$ as output. The interconnected feedback system takes the signal from the subtraction $\mathbf{u}_1-\check{\mathbf{y}}_2$ as its input and produces ${\mathbf{y}}_1$ as its output.
    }\vspace{-0.3cm}
    \label{fig:feedback}
\end{figure}
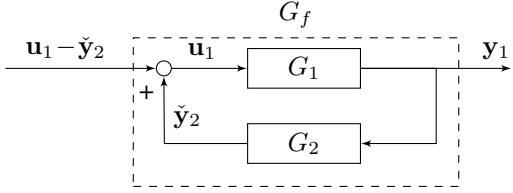

\begin{proof}
 In a positive feedback configuration, the output of $G_1$ becomes the input of $G_2$. Concurrently, $G_1$'s input is derived by subtracting $G_2$'s output from the reference. However, the two trajectories, $\mathbf{w}_1$ and $\mathbf{w}_2$ generally do not satisfy these conditions as they may have been collected in isolation. 
 To set the output of $G_1$ equal the $G_2$'s input, we will feed ${\mathbf{y}}_1$ into $G_2$ to generate $\check{\mathbf{y}}_2 =  \mathcal{Z}({\mathbf{y}}_1,\mathbf{u}_2,\mathbf{y}_2)$. Next, by subtracting $\check{\mathbf{y}}_2$ from $\mathbf{u}_1$, we derive the input to the feedback system, while the output is represented by ${\mathbf{y}}_1$. The trajectory $\mathrm{col}(\mathbf{u}_1-\check{\mathbf{y}}_2, \, {\mathbf{y}}_1)$ would then represent a valid trajectory of the two systems in positive feedback interconnection. An illustration of the proof is given in Fig. \ref{fig:feedback}.
\end{proof}

Note that both Theorems \ref{th: series} and \ref{th: feedback} involve regenerating the output of $G_2$, which requires $\mathbf{w}_2$ to satisfy the low-rank condition. Alternatively, the input and/or output of $G_1$ can be regenerated, in which case $\mathbf{w}_1$ needs to satisfy the low-rank condition instead.
Also, note that Theorems \ref{th: series} and \ref{th: feedback} can also be easily extended to negative feedback, parallel configurations, and more complicated structures. 
In the following section, we will use these results to develop a data-driven parallel to IMC, called IBC.

\section{Internal Behavior Control (IBC)}\label{sec:IBC}
Referring back to  Fig. \ref{fig:IMCBD}, this section introduces the IBC, a data-driven parallel to IMC. We present two IBC approaches: the first is referred to as component-by-component (CBC), where we replace each block in Fig. \ref{fig:IMCBD} with a data-driven predictor. The second is referred to as the unified approach, where we use the results in Section \ref{sec:inter} to combine the data-driven predictors into a single one. 


\subsection{Component-by-Component (CBC) Approach}
In the CBC approach, our objective is to replace the model-based blocks in the IMC structure, specifically $\hat{G}^{-1}$ and $\hat{G}$, with data-driven predictors. For this, we leverage the data-driven forward and inverse predictors, presented in Section \ref{sub:overview BST} and \ref{sub: dd pred}, respectively.



\subsubsection{\textbf{CBC-IBC Forward Model}}
Given a $T_d$-long trajectory $\mathbf{\mathbf{w}_d} = \mathrm{col}(\mathbf{u_d},\mathbf{y_d})$ of  ${{\hat{G}}}$, whose data matrix has rank $T_p+1+n(G)$ for $T_p \geq n(G)$, we can replace the forward model of IMC, i.e., $\hat{G}$, with the data-enabled predictor given in \eqref{eq:g*} and \eqref{eq:ypred}.

\subsubsection{\textbf{CBC-IBC Inverse Model}}
In the original IMC setup, an inverse model of the system, $\hat{G}^{-1}$, is also required to ``cancel out" the system dynamics, effectively making the controlled system (without plant/model mismatch) behave like an identity map from $r$ to $y$. In the CBC-IBC setup, we leverage the inverse predictor reviewed in Section \ref{sub: dd pred} to replace $\hat{G}^{-1}$ with its data-driven counterpart. The inverse predictor uses the same offline data as the forward predictor. {\color{black} While the length of the offline input data remains unchanged, the inverse predictor requires that the output sequence be $L(G)$ steps longer.}  In contrast to the model-based IMC, the inverse predictor in IBC is already causal and implementable because of the delay of $L(G)$ included in the prediction, see Eq.~\eqref{eq:inv sol}, where $u_{\mathrm{pred}}^{\mathrm{inv}}= u(t-L(G))$. 

\subsubsection{\textbf{CBC-IBC Filter}}
In the traditional IMC framework, a low-pass filter with unity DC gain is often employed. A choice  is\footnote{The variable $z$ denotes the argument of the $\mathcal{Z}$-Transform.}:
{\color{black}
\begin{equation}\label{eq:F}
    F(z) = \frac{1}{(\frac{\tau}{T_s} z + (1-\frac{\tau}{T_s}))^{L}},
\end{equation}where $L$ and $T_s$ denote the $L$-delay of system $G$ and the sampling period, respectively.} This choice of order for the filter ensures that the series interconnection of the filter with the system inverse becomes causal and thus implementable. The scalar $\tau \in (0,1)$ is a free parameter, which can be used to trade-off performance and robustness. Smaller $\tau$ values yield faster but less stable responses, while larger values make the response more robust but slower. 

In the CBC-IBC setup, the filter design differs from that in IMC and the unified approach, discussed in the next subsection. This is because a delay of $L({{G}})$ has already been incorporated into the inverse system, as discussed above, and so the data-driven inverse is already implementable. 

Algorithm 1 describes the controller operation in the CBC-IBC at each timestep of the control loop. 
For details on the signal names, please refer to Fig. \ref{fig:IMCBD}. The properties of CBC-IBC are discussed in the following theorem.

\begin{algorithm}
\SetAlgoLined
\let\oldnl\nl
\newcommand{\nonl}{\renewcommand{\nl}{\let\nl\oldnl}}
\nonl \textbf{Setup:} Choose $T_p \geq n(G)$. Make Hankel matrices $U_p, U_f, Y_p, Y_f$, and $Y_{fL}$ through \eqref{eq: hankel} and \eqref{eq:inv} ensuring that both the forward and inverse data matrices have rank $T_p+1+n(G)$. Choose $\tau$ for the filter $F$. \\
\nonl \textbf{External Inputs:} $r(t)$\\
\nonl \textbf{Inputs from the previous loop:} $\mathbf{u}_{[t-T_p,t-1]}, \; \; \; \hat{\mathbf{y}}_{[t-T_p,t-1]}$,
\\ \nonl$\mathbf{s}_{1,[t-T_p-L(G),t-1]},\; \; \; \mathbf{s}_{2,[t-T_p-L(G),t-L(G)-1]}$ \\
\nonl \textbf{Outputs:} Control command $u(t)$\\
Compute $e(t) = y(t) - \hat{y}(t-1)$.\\
Compute $s_1(t) = r(t) - e(t)$.\\
    \text{Data-enabled prediction of System Inverse}: $s_{2}(t) = U_f\begin{bmatrix}
        Y_p \\ U_p \\ Y_{fL}
    \end{bmatrix} ^\dagger \begin{bmatrix}
        \mathbf{s}_{1,[t-T_p-L(G),t-L(G)-1]}\\\mathbf{s}_{2,[t-T_p-L(G),t-L(G)-1]}\\\mathbf{s}_{1,[t-L(G),t]}
    \end{bmatrix}$.\\
    Compute, $u(t)$, by filtering $s_{2}(t)$ through $F$.\\
    \text{Compute the data-enabled prediction (for the next  loop): }\\
    \nonl$\hat{y}(t) = Y_f \begin{bmatrix}
        U_p \\ Y_p \\ U_{f}
    \end{bmatrix}^\dagger \begin{bmatrix}
        \mathbf{u}_{[t-T_p,t-1]}\\ \hat{\mathbf{y}}_{[t-T_p,t-1]}\\u(t)
    \end{bmatrix}$.\\
\caption{CBC-IBC controller implementation}\label{algor_1}
\end{algorithm}

\begin{theorem}\label{th: ident}
    Suppose $G$ is an open-loop-stable minimum-phase SISO LTI system. The CBC-IBC controller using filter $z^LF(z)$ performs identically to the classical IMC controller using filter $F(z)$. Furthermore, the IBC controller results in perfect steady-state tracking and disturbance rejection, and minimizes the $H_2$ norm of the tracking error for step commands.
\end{theorem}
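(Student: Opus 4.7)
The plan is to show that the CBC-IBC architecture with filter $z^L F(z)$ is input-output equivalent to the classical IMC architecture of Fig.~\ref{fig:IMCBD} with a perfect model $\hat G = G$ and filter $F(z)$. Once this block-diagram equivalence is established, the asserted properties---perfect steady-state tracking, disturbance rejection, and $H_2$-optimality of the tracking error for step references---follow immediately from the standard IMC results in \cite{garcia1982internal, morari1987robust, rivera1986internal}.

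First, I would verify that, under the noise-free data assumption and the low-rank condition \eqref{eq:low rank condition}, the data-enabled forward predictor \eqref{eq:g*}--\eqref{eq:ypred} reproduces the input-output map of $G$ exactly. By Remark~2, the single-step predictor coincides with the certainty-equivalent ARX model identified from the offline trajectory, which in the noise-free, persistently exciting case identifies $G$ exactly; hence the CBC-IBC forward block plays the role of $\hat G = G$.

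Second, I would show that the data-enabled inverse predictor \eqref{eq:inv}--\eqref{eq:inv sol} realizes the transfer $z^{-L(G)} G^{-1}(z)$. By construction, at each time $t$ the predictor returns $u(t-L(G))$ as a linear function of the recent outputs (including the current one) and the past inputs; under the low-rank condition this linear relation is uniquely forced by the behavior of $G$ and therefore, as a discrete-time transfer function from $y$ to $u_{\mathrm{pred}}^{\mathrm{inv}}$, equals $G^{-1}(z)$ shifted by $L(G)$ steps of delay. Cascading with the CBC-IBC filter $z^L F(z)$ then gives an overall controller transfer $z^L F(z) \cdot z^{-L(G)} G^{-1}(z) = F(z) G^{-1}(z)$, which is exactly the controller used by classical IMC with filter $F(z)$ and perfect model. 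Combined with the fact that Algorithm~\ref{algor_1} implements the same signal flow as Fig.~\ref{fig:IMCBD} (namely $e(t)=y(t)-\hat y(t-1)$, $s_1=r-e$, inverse block, filter, plant, and parallel model), the closed-loop input-output behaviors of the two controllers coincide.

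Given this equivalence, the remaining claims are classical IMC facts. Since $F(1)=1$ and $\hat G = G$, the closed-loop DC gain from $r$ to $y$ equals one and the DC sensitivity to constant output disturbances equals zero, giving perfect steady-state tracking and disturbance rejection. Moreover, for open-loop-stable minimum-phase $G$, $F(z)G^{-1}(z)$ is the $H_2$-optimal IMC controller for step commands, with $\tau$ trading off robustness against performance. The main obstacle I expect is the rigorous justification of the second step: the vector $\mathbf{g}^{\mathrm{inv}}$ in \eqref{eq:inv} is non-unique, so one must argue that the scalar $u_{\mathrm{pred}}^{\mathrm{inv}}$ returned by \eqref{eq:inv sol} is nonetheless the unique value consistent with $G$, which requires careful use of the rank condition together with the $L$-delay structure of $G$ as established in \cite{eun2023data}.
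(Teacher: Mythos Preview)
Your proposal is correct and follows essentially the same route as the paper's own proof: establish that the data-enabled forward predictor reproduces $\hat G = G$ exactly and that the data-enabled inverse predictor realizes $z^{-L(G)}G^{-1}$, so that pre-multiplying the filter by $z^{L}$ restores the classical IMC controller, whence the tracking, disturbance-rejection and $H_2$-optimality claims are inherited from the model-based theory. Your version is in fact considerably more detailed than the paper's two-sentence argument, and you correctly flag the one nontrivial technical point (uniqueness of $u_{\mathrm{pred}}^{\mathrm{inv}}$ despite non-unique $\mathbf g^{\mathrm{inv}}$), which the paper simply takes for granted by reference to Section~\ref{sub: dd pred} and \cite{eun2023data}.
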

\begin{proof}
    We know that the model-based forward predictions in IMC are identical to their data-driven counterparts in IBC. For the inverse prediction, the two differ by $L(G)$ delays as discussed earlier. The statement of the theorem thus follows. Finally, the tracking and disturbance rejection properties and $H_2$ optimality are directly inherited from the model-based IMC.
\end{proof}

\subsection{Unified Approach}
It is known that the IMC controller can be represented as a single LTI controller, ${C}$,  as shown in Fig.~\ref{fig:enter-label}. This controller is given by $C(s)=\frac{\hat{G}^{-1}F}{1-F}$, which has the same order as $G$, i.e., $n(C)=n(G)$. In this section, we apply the tools developed in section \ref{sec:inter} to create a unified IBC approach that uses a single data-driven predictor to replace $C$, thus eliminating the necessity of separate data-enabled predictors for the forward and inverse systems, as was required in the CBC approach. The unified approach thus yields a more compact controller compared to CBC-IBC.

\begin{figure}
    \centering
    \begin{tikzpicture}[auto, node distance=2cm,>=latex',
  sum/.style={
    draw, 
    circle, 
    inner sep=2pt
  }]
    \node [draw, rectangle, minimum width=1.3cm] (G1) { \( \hat{G}^{-1} \) };
    \node [draw, rectangle, right of=G1 , node distance=2cm, minimum width=1.3cm] (G2) { \( F \) }; 
    \node [draw, rectangle, right of=G2 , node distance=2cm, minimum width=1.3cm] (G3) { \( G \) }; 
    \node [draw, rectangle, below right of=G1 , node distance=1.5cm, minimum width=1.3cm] (G4) { \( \hat{G} \) }; 
    
    \node[sum,  left=0.5cm of G1] (sum1) {};
    \node[sum,  left=0.5cm of sum1] (sum2) {};

    \draw[->]   ([xshift=-1cm]sum2.west)--node[name=s3, midway] { \( r \) }(sum2);
    \draw[->] (G3) -- node[name=y, midway] { \( y \) } ([xshift=1cm]G3.east);

    \draw[->] (G1.east) -- (G2.west);
    \draw[->] (sum1.east) -- (G1.west);
    \draw[->] (sum2.east) -- node[name=s3, pos=0.09] { \( s_3 \) }(sum1.west);
    \draw[->] (G2.east) -- node[name=u, pos=0.2] {\( u\)}(G3.west);
    \draw[->] (G4.west)-- ++(-1.65cm,0) -- (sum1.south);
    \draw[->] (G2.east)-- ++(0.2cm,0)-- ++(0,-1.05cm) -- (G4.east);
    \draw[->] (G3.east)-- ++(0.5cm,0)-- ++(0,-2cm) -- ++(-7.15cm,0)--(sum2.south);

    \node [draw, dashed, fit=(G1)(G2)(G4)(sum1)(u), inner sep=0.225cm, label=above:{ \( C \) }] {};

    \node[anchor=north east] at ([xshift=0.1cm,yshift=-0.1cm]sum1.west) {+};

    \node[anchor=north east] at ([xshift=0.1cm,yshift=-0.1cm]sum2.west) {-};

\end{tikzpicture}

    \caption{Simplified block diagram of the IMC structure}
    \label{fig:enter-label}
\end{figure}
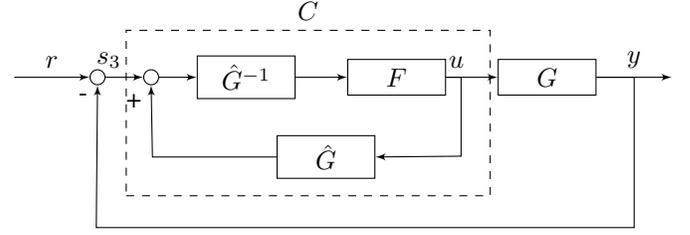

We now apply the results of Section \ref{sec:inter} to create a data-driven representation of $C$, which we achieve by finding an admissible trajectory for $C$ from the input-output data of $G$. The results are summarized in the following theorem.

\begin{theorem} 
Given
\begin{itemize}
    \item $T_d$-long input-output data $\mathbf{u}^d$ and $\mathbf{y}^d$ from the plant, $G$, where the data matrix has rank $T_p+1+n(G)$ for $T_p \geq n(G)$,  
    \item and filter $F$ in \eqref{eq:F},
\end{itemize} 
then the trajectory defined by
\begin{equation*}
    \mathbf{w}_c  = \mathrm{col}({\mathbf{y}}^d-{{\bar{\mathbf{y}}}^d},  {\bar{\mathbf{u}}^d}), \quad \, \bar{\mathbf{y}}^d = f*{\mathbf{y}}^d, \quad \, \bar{\mathbf{u}}^d = f*\mathbf{u}^d ,
\end{equation*}
satisfies $\mathbf{w}_c \in \mathscr{B}|_{T_d}(C)$, where $f$ is the impulse response of the filter $F$, and `*' denotes the convolution operator.
\end{theorem}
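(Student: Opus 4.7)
The plan is to verify the claim at the transfer-function level by checking that the candidate pair satisfies the input-output relation defining $C$. From Fig.~\ref{fig:enter-label}, $C$ is the LTI system with transfer function $\frac{F\hat{G}^{-1}}{1-F}$, so a pair $(e,u)$ lies in $\mathscr{B}|_{T_d}(C)$ precisely when $(1-F)u = F\hat{G}^{-1}e$. The task therefore reduces to checking this single identity with $e = \mathbf{y}^d - \bar{\mathbf{y}}^d$ as the input and $u = \bar{\mathbf{u}}^d$ as the output.

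The first step would be to establish the auxiliary identity $\bar{\mathbf{u}}^d = \hat{G}^{-1}\bar{\mathbf{y}}^d$. Under the IMC assumption $\hat{G} = G$, the offline pair satisfies $\mathbf{y}^d = G\mathbf{u}^d$, and convolving with the filter impulse response $f$ while commuting the LTI operators $f$ and $G$ gives $\bar{\mathbf{y}}^d = f*(G\mathbf{u}^d) = G(f*\mathbf{u}^d) = G\bar{\mathbf{u}}^d$, i.e., $\bar{\mathbf{u}}^d = \hat{G}^{-1}\bar{\mathbf{y}}^d$.

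The second step would be to substitute into the defining identity of $C$ and simplify again via LTI commutativity: $F\hat{G}^{-1}(\mathbf{y}^d - \bar{\mathbf{y}}^d) = \hat{G}^{-1}(F\mathbf{y}^d) - F(\hat{G}^{-1}\bar{\mathbf{y}}^d) = \hat{G}^{-1}\bar{\mathbf{y}}^d - F\bar{\mathbf{u}}^d = \bar{\mathbf{u}}^d - F\bar{\mathbf{u}}^d = (1-F)\bar{\mathbf{u}}^d$, which is exactly the required identity and yields $\mathbf{w}_c \in \mathscr{B}|_{T_d}(C)$.

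The main obstacle I expect is handling initial-condition transients and finite-length truncation rigorously, since the commutation $f*(G\mathbf{u}^d) = G(f*\mathbf{u}^d)$ is algebraically clean only when the signals refer to zero initial conditions. I would deal with this in the same spirit as the earlier initial-condition-removal result, by first passing the offline data through $\mathcal{Z}$ so the effective trajectory starts from rest; with that reduction, the transfer-function manipulations above apply verbatim on the $T_d$-long window of interest. As an alternative route, one could iterate the series and feedback theorems of Section~\ref{sec:inter} on $\hat{G}^{-1}$, $F$, and $\hat{G}$ to construct a data-driven trajectory of $C$ directly, but the direct algebraic route above is considerably cleaner and avoids introducing intermediate data matrices for each block.
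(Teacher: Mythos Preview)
Your argument is correct and, amusingly, the ``alternative route'' you mention at the end is exactly what the paper does: its entire proof is the one-line remark that the claim follows from Theorems~\ref{th: series} and~\ref{th: feedback} with the input/output of $\hat{G}^{-1}$ taken to be $(\mathbf{y}^d,\mathbf{u}^d)$. Concretely, $(\mathbf{y}^d,\mathbf{u}^d)\in\mathscr{B}|_{T_d}(\hat{G}^{-1})$ by role-reversal; the series theorem with $F$ gives $(\mathbf{y}^d,\bar{\mathbf{u}}^d)\in\mathscr{B}|_{T_d}(F\hat{G}^{-1})$; and the feedback theorem with $\hat{G}$ in the return path gives $(\mathbf{y}^d-\bar{\mathbf{y}}^d,\bar{\mathbf{u}}^d)\in\mathscr{B}|_{T_d}(C)$, since passing $\bar{\mathbf{u}}^d$ through $\hat{G}$ (from rest) yields $\bar{\mathbf{y}}^d$ by the same LTI commutation you use in Step~1.

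So the two proofs differ mainly in packaging. The paper's route is designed to showcase the interconnection machinery of Section~\ref{sec:inter}; the $\mathcal{Z}$-based constructions in Theorems~\ref{th: series} and~\ref{th: feedback} absorb the initial-condition bookkeeping, so the proof inherits exactly the level of rigor already established there. Your route verifies the kernel relation $(1-F)u=F\hat{G}^{-1}e$ directly and is arguably more transparent, but it forces you to confront the IC/finite-window issue yourself, which you correctly flag. Your proposed fix (reduce to zero initial conditions via $\mathcal{Z}$ before commuting $f$ with $G$) is adequate and is, in effect, the same normalization the paper's theorems perform implicitly. Either way the substantive content is the single identity $G\bar{\mathbf{u}}^d=\bar{\mathbf{y}}^d$, which both proofs need.
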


\begin{proof}
    The proof directly stems from Theorems \ref{th: series} and \ref{th: feedback} with the input-output sequences for the inverse system set to $\mathbf{y}^d$ and $\mathbf{u}^d$.
\end{proof}

Now, if the data matrix of $\mathbf{w}_c$ has rank $T_p+1+n(C)$ for $T_p \geq n(C)$, then this data matrix can be used in the IBC unified approach to make data-enabled predictions of $C$. Note that $n(C) = n(G)$. Specifically, the data matrix of $\mathbf{w}_c$ is defined as,

{\color{black}\begin{equation}\label{eq: Hc}
    \mathcal{H}_C = \begin{bmatrix}
    \mathcal{H}_{C,u} \\ \mathcal{H}_{C,y}
\end{bmatrix}\in \mathbb{R}^{2(T_p+1)\times (T_d-T_p)}
\end{equation} where 
\begin{equation*}
    \mathcal{H}_{C,u} = \mathscr{H}_{T_p+1}(\textbf{y}^d-\bar{\textbf{y}}^d) =\begin{bmatrix}
    E_p \\ E_f
\end{bmatrix}, \, \mathcal{H}_{C,y} = \mathscr{H}_{T_p+1}(\bar{\textbf{u}}^d) = \begin{bmatrix}
    F_p \\ F_f
\end{bmatrix},\end{equation*} with the partitioning $E_p, F_p \in \mathbb{R}^{T_p\times(T_d-T_p)}$ and $E_f, F_f \in \mathbb{R}^{1\times(T_d-T_p)}$. 
} Data-enabled predictions based on $\mathcal{H}_C$ will be used directly in the unified IBC approach. {\color{black}Theorem \ref{th: ident} also applies to the unified approach, guaranteeing an identical performance to the classical IMC controller using the same filter.} To conclude this section, Algorithm 2 presents the unified IBC controller at each timestep of the control loop. For details of the signal names, please see Fig. \ref{fig:enter-label}.

\begin{algorithm}
\SetAlgoLined
\let\oldnl\nl
\newcommand{\nonl}{\renewcommand{\nl}{\let\nl\oldnl}}
\nonl \textbf{External Inputs:} $r(t)$\\
\nonl \textbf{Inputs from the previous loop:} $\mathbf{u}_{[t-T_p,t-1]},\; \; \;\mathbf{s}_{3,[t-T_p,t-1]}$ \\
\nonl \textbf{Outputs:} Control command $u(t)$\\
\nonl \textbf{Setup:} Set $T_p \geq n(G)$. Filter the input-output sequence $\mathbf{u}^d$ and ${\mathbf{y}}^d$  through $F$ to obtain $\bar{\mathbf{{u}}}_d$ and $\bar{\mathbf{{y}}}_d$, respectively. Then, using the difference ${\mathbf{y}}^d-\bar{\mathbf{y}}^d$ as input and $\bar{\mathbf{{u}}}_d$ as output, generate the Hankel matrices $E_p, E_f, F_p,$ and $F_f$ through Eq. \eqref{eq: Hc}, ensuring that the resulting data matrix has rank $T_p+1+n(G)$.\\
Compute $s_3(t) = r(t) - y(t)$.\\
    \text{Compute and return}:
    $u(t) = F_f\begin{bmatrix}
        E_p \\ F_p \\ E_{f}
    \end{bmatrix}^\dagger \begin{bmatrix}
        \mathbf{s}_{3,[t-T_p,t-1]}\\ \mathbf{u}_{[t-T_p,t-1]}\\s_3(t)
    \end{bmatrix}$. \\
    
    \caption{Unified-IBC controller implementation}\label{algor_2}
\end{algorithm}

\begin{remark}
    We recall from \cite{garcia1982internal} that IMC and constraint-free MPC are essentially identical in the sense that for each tuning of the MPC controller, there exists a unique IMC filter, $F$, that results in the same control actions. A similar relationship exists between their data-driven counterparts. Specifically, just as IBC serves as a data-enabled parallel of IMC, DeePC \cite{coulson2019data} is a data-driven parallel of MPC. Therefore, it is reasonable to infer that the same essential equivalence exists between IBC and DeePC when no constraints are involved and a similar performance is to be expected from both methods in the noise-free setup.

\end{remark}

\begin{remark}\label{rem:minimum}
The unified approach has less memory requirements than CBC, in terms of both online and offline data. Regarding online data, CBC requires solving two data-enabled predictions, one for the forward and one for the inverse model. {\color{black} The forward prediction requires a minimum memory of the past $n(G)$ input-output samples, while the inverse requires the past $n(G)+L({G})$ samples.} The unified approach, on the other hand, solves for only one data-enabled prediction in every control loop but requires memory of the past $n(G)$ samples. Therefore, the unified requires $n(G)+L(G)$ fewer samples and is thus more memory efficient in terms of online data. Regarding offline data, CBC requires a minimum of $T_d = 3n+1+L$ to satisfy the low-rank condition, while unified requires $T_d = 3n+1$, so the unified approach requires $L$ fewer offline data samples.
\end{remark}
\section{Simulation Results}
In this section, we illustrate the proposed IBC controller using an example. Consider the second-order plant,
$
    G(s) = \frac{10(s+1)}{(s+2)(s+4)}.
$
The dynamics are discretized using the zero-order hold method with a sampling period of $0.01$ s. The time constant, $\tau$, of the filter  is set to $0.5$ s. {\color{black}In the CBC approach, the filter is advanced by a time shift of $L=1$ to avoid unnecessary delays (consistent with Theorem \ref{th: ident})}. For the IBC controller design, we collect offline data by feeding the system a random input sequence. {\color{black}The collected output, $\mathbf{y}^d$, as well as the filtered input and output signals used in the unified approach, are displayed in Fig.~\ref{fig:illust_pre}. Since $n(G)=2$, $T_p$ is chosen as 2 in both approaches. The number of offline data samples, $T_d$, is chosen as the minimum: $T_d=8$ for CBC, and $T_d=7$ (see Remark \ref{rem:minimum}).}
 Both algorithms achieve perfect tracking as depicted in Fig. \ref{fig:illust_result}. {\color{black} A step disturbance is applied to the plant input around time $13$ s. Both IBC approaches successfully recover from this disturbance.} Note that while both approaches yield identical performance, the unified approach is more data efficient and solves a single data-enabled prediction in each iteration, in contrast to the CBC approach, which solves two. 


\begin{figure}
    \centering
    \includegraphics[trim=0cm 0cm 1.1cm 0.6cm, clip, width = 0.7\columnwidth]{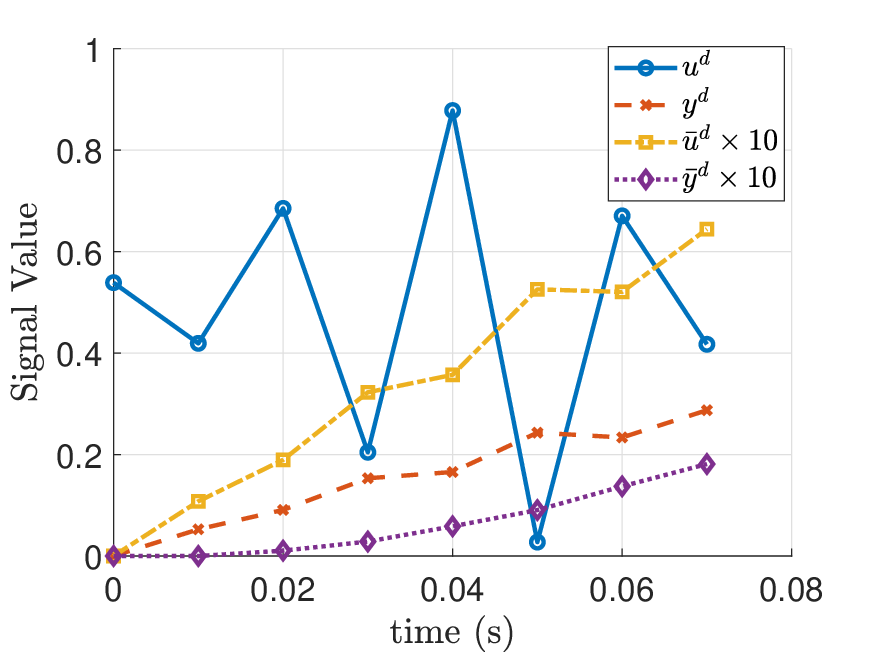}
    \caption{Noise-free raw and filtered input/output data collected offline.{\color{black} Because the input is random, all Hankel matrices satisfy their respective low-rank conditions.}}
    \label{fig:illust_pre}
\end{figure}

\begin{figure}
    \centering
    \includegraphics[trim=0cm 0cm 1.1cm 0.6cm, clip, width = 0.7\columnwidth]{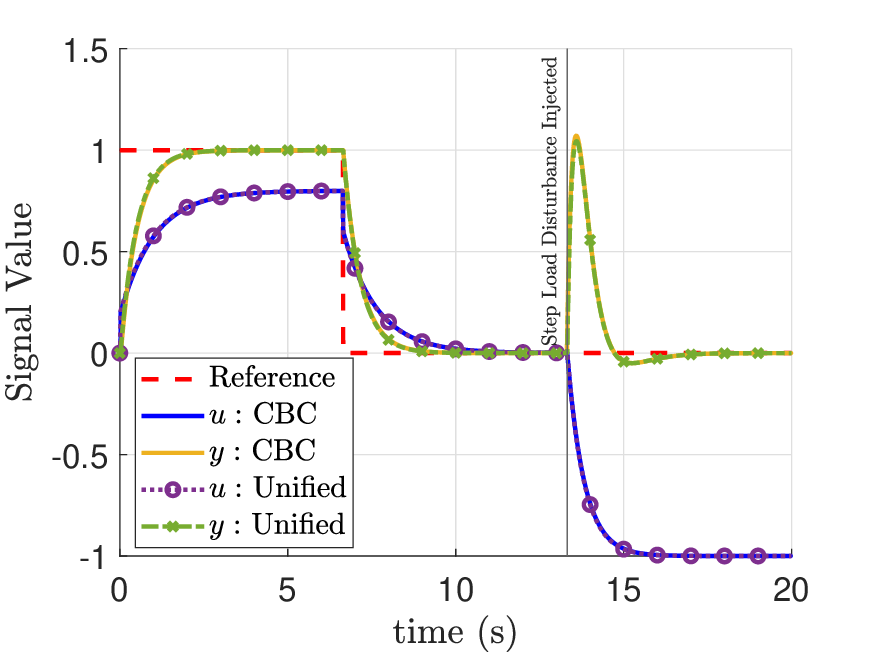}
    \caption{Illustration of the effectiveness of IBC in both CBC and unified approaches in reference tracking and disturbance rejection.}
    \label{fig:illust_result}
\end{figure}

\section{Conclusions and Future Work}

This paper first presented a data-driven formulation of series and feedback interconnection of dynamical systems, where we described how data-driven predictors can be designed for the interconnected system using data from each individual system. 
Using these ideas, we introduced the Internal Behavior Control (IBC), a new data-driven control strategy based on the well-known Internal Model Control (IMC). 
Similar to IMC, the IBC is easy to tune and results in perfect tracking and disturbance rejection but, unlike IMC, does not require a parametric model. We outlined two approaches for IBC implementation: CBC and Unified. We comapred the two approaches in terms of filter design, computations, and memory requirements, where we showed that unified is more memory efficient. 


There are numerous avenues for future research. First, the focus of this paper was on open-loop stable, minimum-phase SISO LTI systems. Future work will relax these assumptions to study unstable, non-minimum phase, and/or MIMO systems in the context of IBC. Second, we assumed here that the offline data is free of ``noise" - measurement noise, process disturbance, or nonlinearities, all of which make the data inconsistent with an LTI structure. Investigation of noise in the offline data and its impact on the IBC performance is another topic of future work. 
Third, in traditional IMC, the filter is tuned using the knowledge of process uncertainties. Future work will investigate uncertainty quantification of the data-driven predictors and IBC filter tuning. 
Finally, we did not offer specific guidelines on how to design the offline data to ensure that the data matrices for the series and feedback interconnections satisfy their respective rank conditions. Input design is another interesting topic for future work. 



\bibliographystyle{unsrt}
\bibliography{bib}

\end{document}